\newcommand{\BBB}{\vspace*{-\bigskipamount}}
\newcommand{\cO}{\mathcal{O}}
\newcommand{\Paragraph}[1]{\BBB\paragraph{#1}}
\newcommand{\remove}[1]{}
\newlength{\pagewidthA}
\newlength{\pagewidthC}
\newlength{\captionwidth}
\newcommand{\qed}{\hfill $\square$ \smallbreak}
\newenvironment{proof}{\noindent{\bf Proof:}}{\qed}
\newtheorem{theorem}{Theorem}
\begin{document}

\baselineskip           	3ex
\parskip                	1ex

\title{Routing in Wireless Networks with Interferences \footnotemark[1] \vfill}

\author{
Bogdan S. Chlebus \footnotemark[2]
\and
Vicent  Cholvi \footnotemark[3]
\and
Pawel Garncarek \footnotemark[4]
\and
Tomasz Jurdzi\'{n}ski \footnotemark[4]
\and
Dariusz R. Kowalski \footnotemark[5]}

\footnotetext[1]{This paper was published as~\cite{ChlebusCGJK-IEEE2017}. 
The work was supported by the Polish National Science Centre under grant  DEC-2012/06/M/ST6/00459 and by the Spanish Ministry of Education Culture and Sport under grant PRX16/00086.}

\footnotetext[2]{Department of Computer Science and Engineering, University of Colorado Denver, Denver, Colorado, USA.}

\footnotetext[3]{Departament de Llenguatges i Sistemes Inform\`{a}tics, Universitat Jaume I, Castell\'{o}, Spain.}

\footnotetext[4]{Instytut Informatyki, Uniwersytet Wroc{\l}awski, Wroc{\l}aw, Poland.  }

\footnotetext[5]{Department of Computer Science, University of Liverpool, Liverpool, United Kingdom.}

\date{}

\maketitle

\vfill

\begin{abstract}
We consider dynamic routing in multi-hop wireless networks with adversarial traffic.
The model of wireless communication incorporates interferences caused by packets' arrivals  into the same node that overlap in time. 
We consider two classes of adversaries: balanced and unbalanced.
We demonstrate that, for each routing algorithm and an  unbalanced adversary, the algorithm is unstable against this adversary in some networks.
We develop a routing algorithm that has bounded packet latency against each balanced adversary.

\vfill

~

\noindent
\textbf{Keywords:}
Wireless network, 
routing, 
adversarial queuing, 
interference,
queue size, 
packet latency.
\end{abstract}

\vfill

\thispagestyle{empty}

\setcounter{page}{0}

\newpage

% 	end titlepage

\section{Introduction}

Models of wireless data networks that abstract from incidental systems details and concentrate on the essential aspects of communication are most conducive to studying routing algorithms. 
One of such aspects are interferences.
The model of \emph{radio networks}~\cite{ChlebusKPR11} assumes that when multiple packets arrive simultaneously into a node then this results in interference experienced by the receiving node.
Such networks are considered in this paper.

Adversarial methodologies of traffic generation make it possible to consider worst-case behavior of routing. 
We use such an approach to study routing in radio networks.
Such networks pose unique challenges to design of routing algorithms because of the need to coordinate activities of the nodes whose transmissions may reach  some node simultaneously.

\Paragraph{Related work.}
The methodology of adversarial routing  in wired networks was pioneered by Borodin et al.~\cite{BorodinKRSW01} and Andrews et al.~\cite{AndrewsAFLLK01}.
Lotker et al.~\cite{LotkerPR04} showed that, in wired networks, every greedy scheduling policy is stable  if the injection rate is smaller than $1/(L + 1)$, where $L$ is the length of the longest route used by any packet.

Stability in general wireless networks without explicit interferences was studied by Andrews and Zhang \cite{AndrewsZ-JACM05,AndrewsZ07} and Cholvi and Kowalski~\cite{CholviK10}. 
Lim et al.~\cite{LimJA14} analyzed the stability of the max-weight protocol in wireless networks  with interferences, but assuming the existence of a set of feasible edge rate vectors sufficient to keep the network stable. 

Chlebus et al.~\cite{ChlebusKR-TALG12} and Anantharamu et al.~\cite{AnantharamuCKR-INFOCOM10} studied adversarial broadcasting in the case of using single-hop radio networks. 
Chlebus et al.~\cite{ChlebusCK-FOMC14} considered interactions among components of routing in wireless networks, which included transmission policies, scheduling policies to select the packet to transmit from a set of packets parked at a node, and hearing control mechanisms to coordinate transmissions with scheduling. 

\Paragraph{Our results.}
We study dynamic routing in multi-hop radio networks with a specific methodology of adversarial traffic that reflects interferences. 
We demonstrate that there is no routing algorithm guaranteeing stability for an injection rate greater than~$1/L$, where the adversary's parameter~$L$ is the largest number of links which a packet needs to traverse while routed to its destination. 
We give a routing algorithm that guarantees stability for injection rates smaller than $1/L$.

\section{Routing against Interferences}
\label{sec:model}

We consider communication in multi-hop radio networks. 
A network is modeled as a (simple) undirected connected graph $G=(V,E)$ with some $n=|V|$ nodes.
An edge in $E$ represents two directed communication channels connecting the endpoints; an oriented edge from $E$ is referred to as a \emph{link}.
An edge~$(u,v)$, when interpreted as a link with tail~$u$ and head~$v$, is denoted as $u\rightarrow v$.
Messages are transmitted along the links according to the links' orientation.
At most one link determined by an edge can be used at a time.

We say that some parameter of the communication environment is \emph{known} when it can be used in an algorithm's code.
Each node is assigned a unique name, which is an integer in $[1; n]$. 
Every node knows $n$ and its own name.

An execution of a communication algorithm is synchronous, in that it is structured as a sequence of \emph{rounds}. 
In each round, a node may either \emph{transmit} a message or \emph{listen} trying to hear incoming messages. 
Messages are delivered in the round of transmission.
A message that is successfully received is said to be \emph{heard} by the receiving node.
A node $v$ can hear a message from its neighbor~$u$ in a round $t$ if $v$ listens in round~$t$ and $u$ is the only node among $v$'s neighbors  that transmits in this round. 
Messages delivered simultaneously to a node but not heard by the node are said to \emph{collide} or \emph{interfere with one another} at the node.
Messages facilitate routing, in particular they may carry packets traversing the network.
Nodes may need to store multiple packets in their private memory, which is referred to as the node's \emph{queue}.
The number of packets residing simultaneously in such a queue is the queue's \emph{size}.

\subsection{Routing}

A routing algorithm handles packets that are injected at the nodes of graph $G$ and need to reach their respective  destination nodes by traveling through the network in a store-and-forward manner.
A packet together with the round it was injected in and the (simple) oriented path it needs to traverse, make a \emph{tour}.
Each packet is encapsulated as a part of its tour at the time of injection and during  the network's traversal.
The number of links in a tour's path is this path's \emph{length},  also referred to as the \emph{tour's length}.

Consider a tour determined by a packet $p$ injected in round~$t$ into node~$v_1$ that needs to pass through the nodes on the path  $\langle v_1,\ldots,v_k\rangle $ to reach $v_k$.
Each link $v_i\rightarrow v_{i+1}$ is said to be among the tour's \emph{links}, for $1\le i<k$.
The start node~$v_1$ is this tour's \emph{source} and the end node~$v_k$ is this tour's \emph{destination}.
Packet~$p$ is routed by traversing the links according to the tour's specification: there are $k-1$ rounds $t_1,\ldots,t_{k-1}$ such that the node~$v_i$ transmits packet~$p$ in round~$t_i$ and the node~$v_{i+1}$ hears~$p$ in this very~round~$t_i$,  for $i\in[1,k-1]$, where $t\leq t_1<t_2<\cdots<t_{k-1}$.
The packet is \emph{delivered} in round~$t_{k-1}$, which is the round when the packet reaches the tour's destination.
The \emph{latency} of this specific routing of the tour is $t_{k-1}-t$, which is the number of rounds the tour spends in the network between its packet's injection at the source and its delivery to the destination.

\subsection{Interferences}

We represent interferences as abstract conflicts between parts of a network.
The basic case is of a conflict between a node $w$ with a link $u\rightarrow v$.
Intuitively, it occurs when a transmission by~$w$  cannot be reconciled with having a different message delivered successfully from~$u$ to~$v$ in the same round.
A node~$w$ \emph{conflicts with a link $u\rightarrow v$} if either  $w=u$ or $w=v$ or the nodes~$v$~and~$w$ are neighbors.
We extend this to say that a node~$w$ \emph{conflicts with a tour} if $w$ conflicts with some among the tour's links. 
This concept of conflict between a node and a tour conservatively reflects the worst-case possibility of the node's transmission interfering with the tour's packet when it is traversing the tour's path.

When transmissions through links of a tour may prevent packet deliveries along the links of another tour then these tours are said to be in conflict.
Formally, tours $f_0$ and~$f_1$ \emph{conflict with one another} if either they pass through the same node or there is a node in one of these tours $f_i$, which is different from~$f_i$'s destination,  that conflicts with the other tour~$f_{1-i}$.
Conflicting pairs of tours can be interpreted as edges in a new graph.
Formally, for a simple graph $G=(V,E)$ and a set of tours~$F$ in~$G$, the \emph{conflict graph of~$F$} is a simple graph with tours in $F$ taken as the vertices and two different tours from $F$ making an edge when they conflict with each other.

Figure~\ref{fig:conflicting_graph} gives an example of a conflict graph  for a network and a set of tours in it, where arrows represent the traversed paths.
The graph on the left represents a network in which routing is performed, and the resulting conflict graph for the specified tours is depicted on the right.

%: figure two graphs

\begin{figure}
\begin{center}
\hfill
\includegraphics[width=\pagewidthA]{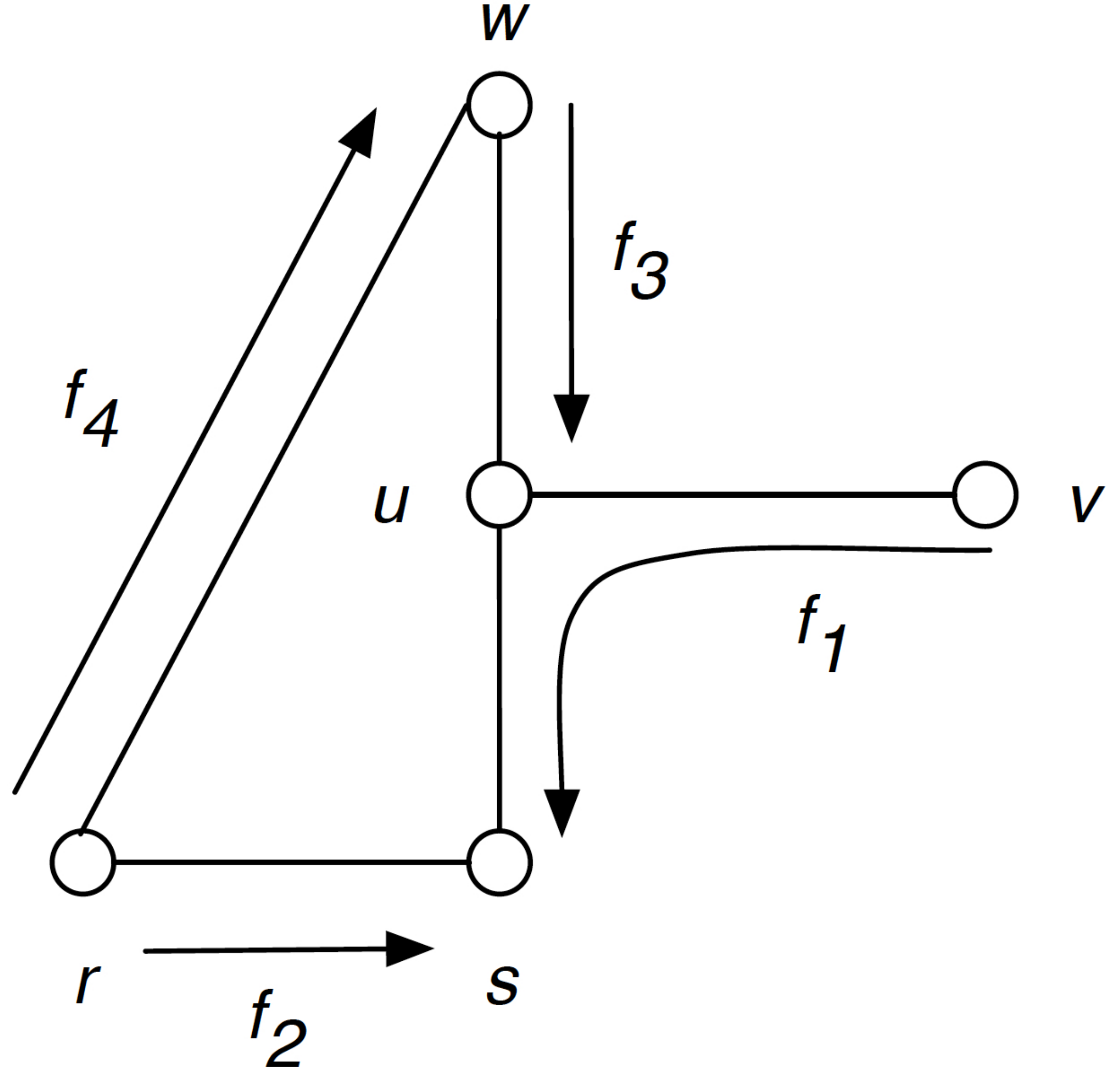}
\hfill
\includegraphics[width=\pagewidthA]{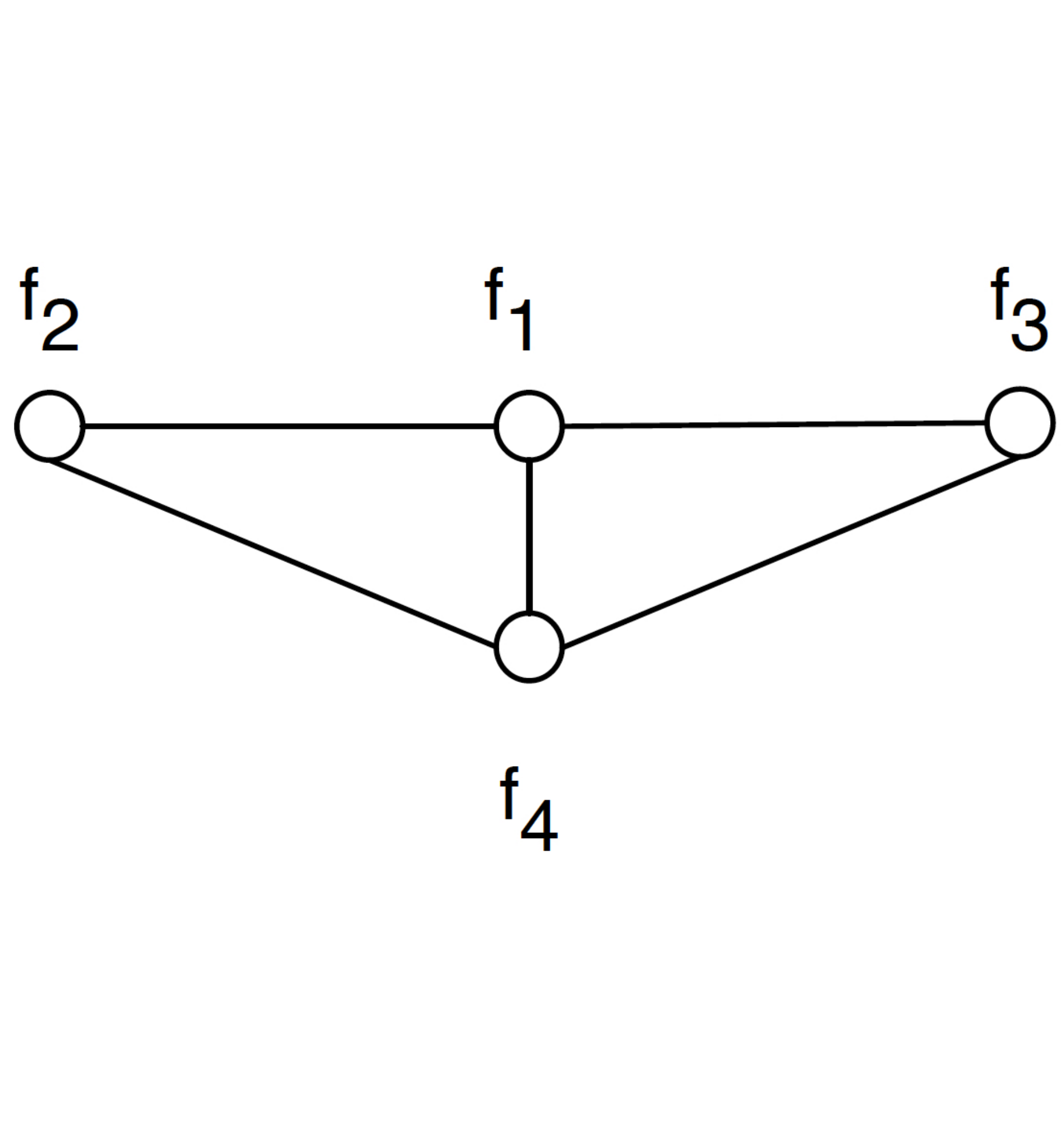}
\hfill ~

\parbox{\captionwidth}{\caption{\label{fig:conflicting_graph}
Four tours named $f_1, f_2, f_3, f_4$.
Tour~$f_1$ conflicts with each of the other tours~$f_i$, for $i\ne 1$, because it shares nodes with both $f_2$ and $f_3$, and the start node $r$ of $f_4$ conflicts with the link $u\rightarrow s$, which belongs to~$f_1$.
 Tour~$f_2$ conflicts with $f_4$ because they share the node~$r$, while $f_2$ does not conflict with $f_3$ because they do not share nodes and only their respective destination endpoints $s$ and $u$ conflict with the other tour.
 Tours~$f_3$ and~$f_4$ conflict with each other because they share the node~$w$.}}
 \end{center}
\end{figure}

\subsection{The static link scheduling problem}

We now consider static routing, when a set of routing tours  is given as input.
(This is an aside to the main topic of dynamic routing.)
The problem is further restricted such that each tour is just one link.
The goal is to route all these tours within the shortest possible interval of rounds.
This is known as the \emph{static link scheduling (SLS)} problem~\cite{Kesselheim12}. 
We show that SLS  is related to  vertex-colorings of conflict graphs.
The minimum number of colors used in coloring vertices of a graph, such that each pair of adjacent vertices are assigned distinct colors,  is the \emph{chromatic number} of the graph.

\begin{theorem}
\label{the:optimal_transmission}
The minimum number of rounds to route an instance of  the link scheduling problem is equal to the chromatic number of the respective conflict graph.
\end{theorem}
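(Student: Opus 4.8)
The plan is to reduce the theorem to one structural fact about single-round schedules: \emph{a set $S$ of one-link tours can be routed within a single round if and only if $S$ is an independent set in its conflict graph.} Granting this, note that a schedule of length $r$ for the whole instance $F$ is just an assignment of each tour to one of $r$ rounds, and since transmissions carried out in different rounds never interfere, the set of tours assigned to a fixed round must be routable within that round, hence independent. So $r$ rounds suffice for $F$ if and only if the vertex set of the conflict graph of $F$ can be partitioned into $r$ independent sets, i.e.\ if and only if $F$ admits a proper $r$-coloring. The least such $r$ is by definition the chromatic number, which gives the theorem.

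To prove the ``if'' direction of the structural fact, I would suppose $S$ is independent and route it by having, in one round, the tail of each link of $S$ transmit its packet while every other node listens. First, no two links of $S$ can share a tail $u$, since two such tours pass through the common node $u$ and would conflict; hence each transmitter sends one well-defined packet. Now fix $u\rightarrow v\in S$: the node $v$ is not a tail of any link of $S$ (such a link would share the node $v$ with $u\rightarrow v$), so $v$ listens; and no neighbor $w\neq u$ of $v$ transmits, for otherwise $w$ is the tail of some link of $S$ whose non-destination node $w$, being adjacent to $v$, conflicts with $u\rightarrow v$, contradicting independence. Thus $u$ is the unique transmitting neighbor of $v$, so $v$ hears the packet and $u\rightarrow v$ is delivered; as $u\rightarrow v$ was arbitrary, all of $S$ is routed in this round.

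For the ``only if'' direction, I would suppose $S$ is routed within one round but two distinct tours $f_0=u_0\rightarrow v_0$ and $f_1=u_1\rightarrow v_1$ of $S$ conflict, and derive a contradiction by cases. If $f_0$ and $f_1$ share a node, then: if $u_0=u_1$, this node must transmit the two distinct packets of $f_0$ and $f_1$ in the same round; if $v_0=v_1$, this node has the two distinct transmitting neighbors $u_0,u_1$, so the packets collide; and if $u_0=v_1$ (the case $v_0=u_1$ being symmetric), this node must both transmit (for $f_0$) and listen (for $f_1$) in the same round --- each possibility is impossible. If instead $f_0$ and $f_1$ share no node, then by definition a non-destination node of one of them, say $u_0$, conflicts with the link of the other; since $u_0\neq u_1$ and $u_0\neq v_1$, this means $u_0$ is adjacent to $v_1$, so while $u_1$ transmits the packet of $f_1$ to the listening node $v_1$, the distinct node $u_0$ also transmits, and $v_1$ fails to hear $f_1$ --- again a contradiction. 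Hence $S$ is independent.

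I expect the main obstacle to be the ``only if'' direction: the case analysis has to be aligned exactly with the (slightly asymmetric) definition of conflict between tours, including the clause about a node that is not the tour's destination, and each case must be closed using the radio reception rule (a head hears only when it has a unique transmitting neighbor) together with the facts that a node cannot simultaneously transmit and listen, nor transmit two different packets in one round.
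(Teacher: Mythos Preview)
Your proof is correct and follows essentially the same route as the paper's: both arguments identify a valid $r$-round schedule with a partition of the conflict graph into $r$ independent sets, you via an explicit structural lemma (one-round routability $\Leftrightarrow$ independence) checked by a careful case analysis, the paper via the terser observation that in a minimal schedule ``packets of conflicting tours are transmitted in different rounds.'' The only cosmetic difference is that the paper first prunes to a schedule in which each packet is transmitted exactly once before reading off the coloring, whereas you work directly with the delivery round; your treatment of the ``only if'' direction is in fact more thorough than the paper's.
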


\begin{proof}
Consider a set $F$ of one-link  tours that makes an instance of the static link scheduling problem. 
Let $\mu$ be the chromatic number of the respective collision graph.
Let $T$ be the number of rounds of a shortest schedule $S$ to route the packets in $F$.
We want to show that $T=\mu$.

First we show that $\mu\le T$. 
We may assume without loss of generality that each packet is transmitted exactly once in $S$. Namely, we may replace $S$ by a minimal subset of transmissions in $S$ such that each packet's  transmission results in this packet's delivery to its destination.
This implies that packets of conflicting tours are transmitted in different rounds. 
Now assign the round of transmission of a packet as the color of the node representing this packet's tour. 
This is a proper coloring because an edge connects two conflicting tours, which  are transmitted in different rounds.
Therefore there exists a vertex coloring of the conflict graph of $T$ colors, and so $\mu\le T$.

Next we show that $T\le \mu$.
Consider a coloring of the conflict graph with $\mu$ colors.
The colors could be identified with the integers in the interval $[1,\mu]$.
A schedule of transmissions in $\mu$ rounds can be defined as follows: a node of color~$i$ transmits in round~$i$.
When a node transmits a packet then the packet is heard by the destination node because no packets with conflicting tours are transmitted in this round, as their tours are of different colors. 
It follows that all the packets get delivered in $\mu$ rounds, so that $T\le \mu$.
\end{proof}

\subsection{Adversaries inject  tours}

We model dynamic injection of tours by way of an adversarial model, in the spirit of similar approaches used in~\cite{BorodinKRSW01,AndrewsAFLLK01,LotkerPR04,ChlebusKR-TALG12,AnantharamuCKR-INFOCOM10,CholviK10,ChlebusCK-FOMC14}.
An adversary represents the users that generate packets to be routed in a given radio network.
The constraints imposed on packet generation by the adversary allow to consider worst-case performance of deterministic routing algorithms handling dynamic traffic.

An adversary is determined by three numbers $(\rho,b,L)$, which together are called its \emph{type}.
The number $\rho$ is the \emph{injection rate} and needs to satisfy $0 \leq \rho \leq 1$.  
The number~$b>0$ is called the \emph{burstiness} and  represents the maximum number of tours injected in the same round that may conflict with a node. 
The number~$L$ is the \emph{stretch} and is a positive integer.

An adversary injects packets continuously by placing them in the nodes of a network.
An injected packet is encapsulated with the path it needs to traverse  into a tour.

Let $\tau$ be a time interval and $v$ a node.
We refer to the number of tours injected during~$\tau$ that $v$ conflicts with as the \emph{load of node~$v$ in~$\tau$}.
This means that a tour $f$ contributes a unit to the load of each node $v$ such that $v$ conflicts with~$f$.  

The adversary of a given injection rate $\rho$, burstiness~$b$ and stretch~$L$ is subject to the following restrictions in how tours may be injected.
First, for each time interval $\tau$ and each node~$v$ the load of~$v$ in~$\tau$ is at most
$\rho \,\cdot \!\mid\!\tau \!\mid \! + \, b$.
Second, the length of path of an injected tour is at most~$L$. 

An adversary of type $(\rho,b,L)$ is called \emph{balanced} when the inequality $\rho \cdot L < 1$ holds, and it is \emph{unbalanced} when the inequality $\rho \cdot L > 1$ holds.

\section{Stability}

A routing algorithm handles itineraries, which include the paths the packets are to traverse.
A critical part of a routing algorithms is a \emph{transmission policy} which determines which nodes transmit in a round,  as well as the contents of the transmitted messages.
We consider distributed transmission policies, when each node decides in each round whether to transmit a message, and if so, then it determines the contents of the transmitted message.
A message contains one tour stored in the node's queue, and possibly additional control bits.

The immediate goal of transmitting the packet of a tour is to forward it to the node designated in the tour as the next one on the path to be traversed by the tour's packet.
A transmitted tour is heard by the intended recipient node when that recipient node is not transmitting in this round and the transmitted message does not interfere with other transmissions, following the radio network model's specification, as given in Section~\ref{sec:model}.

A routing policy is \emph{stable} against an adversary when the number of tours in queues at the nodes is bounded in all executions when packet injections conform to the restrictions imposed by the adversary's type.

 \emph{Packet latency} of a routing algorithm against an adversary is  the maximum latency attained by a tour in executions subject to the restrictions imposed by the adversary's type.
When packet latency is bounded then queues are also bounded, as each queue size in a node is the lower bound on the delay of a packet already queued.

\begin{theorem}
For each  unbalanced adversary and each sufficiently large integer $n>0$ there exists a network of $n$ nodes such that every routing algorithm is unstable when the adversary injects tours into this network.
\end{theorem}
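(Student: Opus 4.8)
The plan is to exhibit, for a given unbalanced adversary of type $(\rho, b, L)$ with $\rho L > 1$, a specific network on which traffic can be injected faster than any algorithm can clear it. The natural candidate is a \emph{line} (or path) network $v_1, v_2, \ldots, v_n$, since on a line the conflict constraint is strong: a single link $v_i \to v_{i+1}$ conflicts with all of $v_{i-1}, v_i, v_{i+1}, v_{i+2}$, so at most one link along a stretch of four consecutive nodes can be active in any round. First I would fix a segment of the line of length exactly $L$ (so a tour can traverse it in a single injected path) and, more importantly, identify a single node $w$ that conflicts with \emph{every} tour whose path lies in a window of about $L$ links around $w$. The point is that all such tours must be ``processed'' one link-transmission at a time near $w$, because of the conflict structure, yet the adversary may inject them at rate $\rho$ per round against that node.

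The key steps, in order: (1) Choose $n$ large enough that the line contains a window $W$ of $\Theta(L)$ consecutive nodes, and pick a central node $w \in W$ that conflicts with every link inside $W$. (2) Have the adversary repeatedly inject tours, each of path length $L$, all contained in $W$ and all routed so that their links fall within $w$'s conflict zone; verify this is legal, i.e. that the load of $w$ in any interval $\tau$ stays within $\rho|\tau| + b$ — this just means injecting at the maximal allowed rate against $w$. (3) Count work versus capacity: each injected tour requires $L$ successful link-transmissions, and every one of those transmissions conflicts with $w$, hence occupies a round in which essentially nothing else in $W$ can make progress. So over an interval of $T$ rounds the network can complete at most $\approx T$ link-transmissions total inside $W$, i.e. at most $\approx T/L$ tours, while the adversary injects $\approx \rho T$ tours. (4) Since $\rho L > 1$, we have $\rho T > T/L$, so the backlog of undelivered tours — equivalently the total number of queued packets — grows linearly in $T$ without bound, contradicting stability. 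This argument is independent of the algorithm, since it only uses the physical conflict constraint, not any scheduling decision.

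I would formalize step (3) via a potential/counting argument: let $\Phi(t)$ be the total number of remaining link-transmissions owed by tours currently alive and located in $W$. Each round increases $\Phi$ by the number of newly injected links ($L$ times the number of new tours) and decreases it by at most the number of successful transmissions in $w$'s conflict zone, which is at most $1$ if we restrict attention to transmissions among links conflicting with $w$ — and by construction \emph{all} links of our injected tours have this property, at least until a packet would leave $W$, which I can prevent by choosing the tour paths to stay strictly inside $W$. Taking expectations is unnecessary since everything is deterministic and adversarial; I simply run the adversary for $T$ rounds and compare the deterministic injection total $\rho T - O(b)$ against the deterministic throughput bound $O(T/L + \text{const})$.

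The main obstacle I anticipate is the bookkeeping in step (2)–(3): making the conflict zone of $w$ genuinely contain \emph{all} relevant link-transmissions so that the ``at most one per round'' bound is airtight, while simultaneously keeping the injected load within the $\rho|\tau| + b$ constraint for \emph{every} node (not just $w$) and respecting that tour paths must be simple and of length at most $L$. On a line these are all easy to satisfy with the right constants, but one must be careful that a tour does not need to ``hand off'' a packet to a node outside $W$ in a way that escapes $w$'s conflict zone; choosing $W$ a little larger than $L$ and centering the tours handles this. A secondary subtlety is the $+b$ burstiness term, which only adds a constant to the injection count and therefore does not affect the linear-in-$T$ divergence, but should be acknowledged so the instability conclusion is clean.
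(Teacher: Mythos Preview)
Your approach has a genuine gap: on a path graph, a single node $w$ does \emph{not} conflict with every link in a window of $\Theta(L)$ nodes. By the paper's definition, $w$ conflicts with a link $u\to v$ only when $w\in\{u,v\}$ or $w$ is a neighbor of the head~$v$; on a line this limits $w$'s conflict zone to links within distance~$2$ of~$w$. Hence a tour of length~$L$ contributes only $O(1)$ of its link-transmissions to $w$'s conflict zone, not~$L$, and your potential $\Phi$ can drop by as much as $\Theta(L)$ in a single round via pipelining, not by at most~$1$. Concretely, take $\rho=1/5$, $L=10$ (so $\rho L=2>1$): on a line, space packets three nodes apart and advance every packet one hop each round; every transmission succeeds and the steady-state throughput is $1/3>1/5$. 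The line is \emph{stable} against this unbalanced adversary, so it cannot serve as the witnessing network, and the ``at most $T/L$ tours in $T$ rounds'' bound in your step~(3) is false there.

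The paper sidesteps this by choosing the network to be the \emph{clique} $K_n$ with $n>L$. In a clique every node is adjacent to every other, so any second transmitter collides with the first at every listener; hence at most one link-transmission can succeed per round \emph{globally}, and every link of every tour automatically lies in every node's conflict zone. With that choice your counting argument in step~(3)--(4) goes through verbatim: over $T$ rounds the adversary injects about $\rho T$ tours requiring $L\rho T$ successful transmissions, while at most $T$ can occur, and $L\rho>1$ forces unbounded queues. The missing idea, then, is not the work-versus-capacity accounting (which you have), but the correct network: you need one where a single bottleneck genuinely serializes \emph{all} $L$ hops of every injected tour, and the clique---not the line---provides that.
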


\begin{proof}
Let $(\rho,b,L)$ be the type of an unbalanced adversary, which means such that $\rho \cdot L > 1$.
Take an arbitrary $n>L$ and let the network be the clique of $n$ nodes.

An injected tour of a positive length contributes a unit to each node's load, because all nodes are neighbors. 
Therefore, the adversary can inject up to $\rho t +  b$ tours into all the nodes in the network in a time interval of length~$t$.
When there are multiple disjoint time intervals of length $t$ each, then the burstiness~$b$ can be accounted for at most once, but up to $\rho t$ new packets can be injected in each such an interval.

The adversary will inject packets that need to be forwarded exactly~$L$ times each. 
In a complete network, at most one message can be heard in a round, so at most one packet can be forwarded in a round.
As each tour contains exactly $L$ links, the total number of message to be heard, in order to deliver the packets  injected in a time interval of length~$t$, is  at least
\[
L  \rho t   = t + (L\rho -1) t.
\ 
\] 
At most $t$ messages can be heard in $t$ rounds, so  the adversary can generate,  in disjoint intervals of $t$ rounds, a surplus of $(L\rho -1) t$ messages to be heard in the future.
Take an integer~$t$ such that $(L\rho -1) t\ge 1$, which exists because $\rho \cdot L > 1$.

At least one packet is needed to account for $L$ messages. 
Thus the adversary can make the number of packets in the queues grow by at least one packet per $L$ intervals in a sequence of consecutive disjoint time intervals of $t$ rounds each.
The resulting execution is unstable.
\end{proof}

\section{Efficient Routing}

We specify a routing algorithm that we call \textsc{Old-Go-First}.
It provides bounded packet latency when executed against balanced adversaries.
In the algorithm's design, we rely on the Brook's theorem~\cite{Brooks1941}, which states that a graph of a maximum node degree $\Delta$ can be colored with $\Delta+1$ colors. 

The algorithm's execution is partitioned into disjoint intervals of rounds called \emph{windows}.
Tours injected in a window  are \emph{new} in this window and become \emph{old} when the next window starts.
Tours that are old in a window are transmitted in the window, while the new tours wait for the next window to be transmitted in it as old.

A window is partitioned into two phases, see Figure~\ref{fig:transmission-protocol}.
The first phase consists of preprocessing in order to prepare the second phase, which is spent  executing a transmission policy.

The phases are specified in greater detail next.

\noindent
\emph{Phase~1}: preprocessing to prepare routing in the next phase:  

\begin{enumerate}
\item
Collect in each node the specification of all the old tours. 
Let $L'$ denote the length of the longest old tour.
\item
Build the conflict graph for the old tours.
Let $\Delta$ be the maximum degree of a node in the conflict graph.
\item
Color the vertices of this conflict graph with $\Delta+1$ colors. 
\end{enumerate}

\noindent
\emph{Phase~2}: tours are routed in the next time interval of $L' (\Delta+1)$ rounds by the following transmission policy:

\begin{enumerate}
\item
Partition this time interval into $L'$ intervals, called \emph{super-rounds}, each of ${\Delta + 1}$ rounds.
\item
In each super-round: a  node storing  a tour of color $i$ transmits this tour in the 
$i$th round of the super-round.
\end{enumerate}

\begin{figure}
\begin{center}
\includegraphics[width=\pagewidthC]{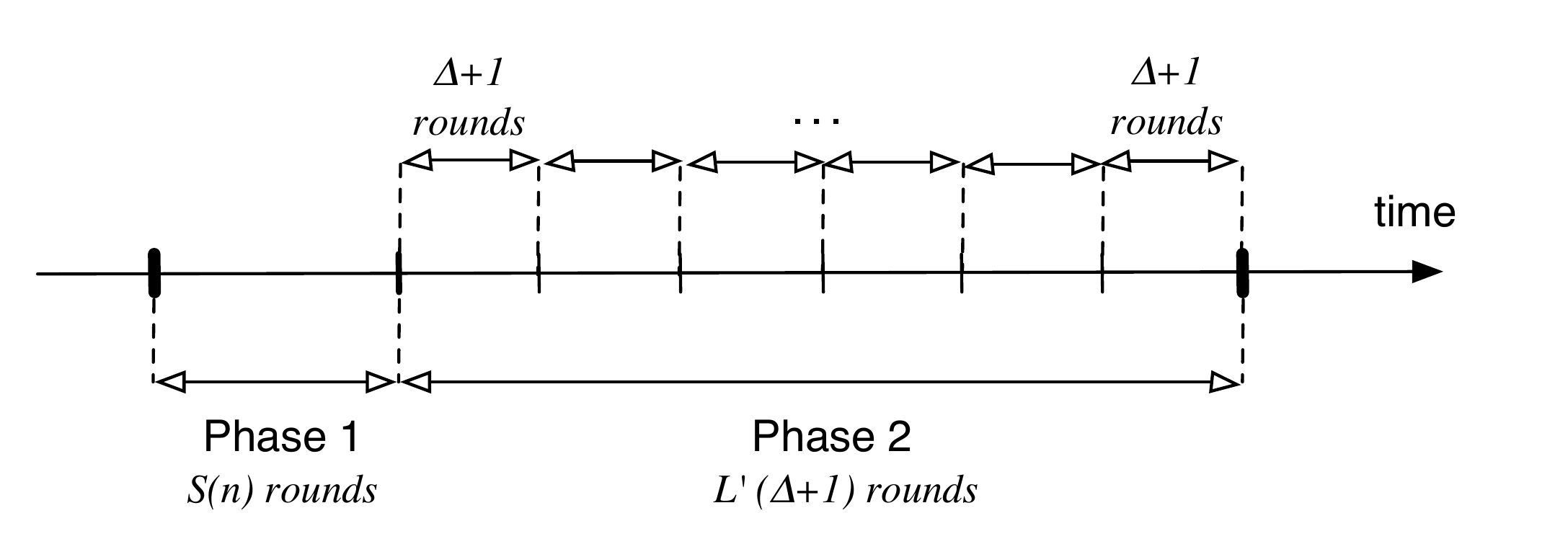}

\parbox{\captionwidth}{\caption{\label{fig:transmission-protocol}
A  window of an execution of transmission policy \textsc{Old-Go-First} consists of two phases.
Phase one is preprocessing.
Phase two consists of $L'$ super-rounds, each of $\Delta+1$ rounds, where $\Delta$ is the maximum degree of the conflict graph of the old tours in this window, which was built in phase one.}}
\end{center}
\end{figure}

A node stores at most one tour of each color in the beginning of a super-round, because tours passing through a node conflict with each other, so they are colored differently.
Collecting the information about all the old tours originating in each node can be accomplished by gossiping of what each node stores originally.
We can use the algorithm given in~\cite{ChlebusKPR11}, which runs in $S(n)=\cO( n\log^4 n)$ rounds in a network of~$n$ nodes.
This algorithm uses ``short'' messages, in that one rumor (the information to be gossiped that originates in one node) requires one message, and it takes one round to transmit a message.

For the purpose to prepare a transmission policy for a window, a rumor contains the information about all the old tours stored in a node at the end of the previous window.
Once gossiping is completed, building the conflict graph and coloring its nodes~\cite{SKULRATTANAKULCHAI2006} can be done in negligible time,  concurrently by all the nodes.

\begin{theorem}
Routing algorithm \textsc{Old-Go-First} attains packet latency $\cO\bigl( \frac{n\log^4 n +bL}{1-\rho L} \bigr)$ when executed against a balanced adversary of type $(\rho,b,L)$ on a network of $n$ nodes.
\end{theorem}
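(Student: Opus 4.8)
The goal is to bound the latency of \textsc{Old-Go-First} against a balanced adversary $(\rho,b,L)$. The plan is to choose the window length, show that every old tour is delivered within the window in which it is old, and then account for the time a tour waits before becoming old plus the time spent in its ``active'' window. The key quantities are: the gossiping cost $S(n)=\cO(n\log^4 n)$ that dominates Phase~1; the bound $L'\le L$ on the length of any old tour; and a bound on the maximum degree $\Delta$ of the conflict graph of the old tours in a window, which is what controls the length $L'(\Delta+1)$ of Phase~2.

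First I would fix a window length $W$ (to be determined) and argue correctness of Phase~2: in each super-round, a node holding a tour of color $i$ transmits it in round $i$, and because the coloring is proper on the conflict graph, no two conflicting tours are transmitted in the same round; by the radio-network semantics (and the definition of conflict between a node and a tour's links), every such transmission is heard by the intended next node. Hence in each super-round every old tour advances by one link, and since there are $L'\ge$ (length of the tour) super-rounds, every old tour reaches its destination within Phase~2 of its window — provided the window is long enough to contain Phase~1 ($\cO(n\log^4 n)$ rounds) plus Phase~2 ($L'(\Delta+1)\le L(\Delta+1)$ rounds).

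The crux is bounding $\Delta$, the maximum degree of the conflict graph of the tours that are old in a given window $w$. These are exactly the tours injected during the previous window, an interval of length $W$. I would bound the degree of a tour-vertex $f$ by counting tours that conflict with $f$: a conflicting tour either passes through one of the $\le L+1$ nodes of $f$, or contains a node conflicting with one of $f$'s $\le L$ links. Each node of the network has load at most $\rho W + b$ over that window (the adversarial constraint), and conflict with a tour means conflict with one of its $\le L$ links, hence with one of a bounded number of nodes; so summing the load constraint over the relevant nodes of $f$ and over their neighbors (using that a link $u\to v$ conflicts with $u$, $v$, and neighbors of $v$) gives $\Delta = \cO((\rho W + b)\cdot L)$ — here one must be a little careful that the burstiness $b$ enters only additively, as the adversary's bound is $\rho|\tau|+b$, not per sub-interval. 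This is the step I expect to be the main obstacle: getting the dependence on $L$, $\rho$, $b$, $W$ right, and in particular making sure the resulting window equation is self-consistent.

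Once $\Delta = \cO((\rho W+b)L)$ is established, the window must satisfy $W \ge c_1 n\log^4 n + c_2 L(\Delta+1) = c_1 n\log^4 n + \cO(L^2(\rho W + b))$, i.e. $W(1-\cO(\rho L^2)) \ge c_1 n\log^4 n + \cO(bL^2)$. To close this I would instead work with the \emph{balanced} hypothesis $\rho L<1$ in a sharper form: bound the number of links that all old tours inject into any single node over a window, so that the ``transmission demand'' at a node is $\le (\rho W+b)L$ link-traversals, matching the $L(\Delta+1)$-round budget only if $\rho L<1$; solving the fixpoint $W \asymp \frac{n\log^4 n + bL}{1-\rho L}$ then works, with the $1-\rho L$ in the denominator coming exactly from isolating $W$ in the window inequality. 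Finally, latency: a packet injected in window $w$ waits at most one full window $W$ before becoming old, then is delivered within the Phase~2 of window $w+1$, which has length $\le W$; hence latency $\le 2W = \cO\!\bigl(\tfrac{n\log^4 n + bL}{1-\rho L}\bigr)$, which is the claimed bound.
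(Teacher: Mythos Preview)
Your overall architecture is exactly the paper's: fix a window length, show Phase~2 delivers every old tour (one hop per super-round, via the proper coloring), and then choose the window so that Phase~1 plus Phase~2 fit inside it; latency is then at most two windows. Where you diverge from the paper---and where the argument breaks---is the bound on~$\Delta$.

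You bound the degree of a tour $f$ in the conflict graph by summing the adversarial load over the $\le L+1$ nodes of $f$ (and their neighbours), obtaining $\Delta=\cO\bigl((\rho W+b)\,L\bigr)$. Plugging this into the Phase~2 length $L'(\Delta+1)\le L(\Delta+1)$ produces a window inequality $W\bigl(1-\cO(\rho L^{2})\bigr)\ge c_1 n\log^4 n+\cO(bL^{2})$, which is \emph{not} solvable under the balanced hypothesis $\rho L<1$ alone (you would need $\rho L^{2}$ bounded away from~$1$, which is a strictly stronger assumption). You notice this and propose to ``work with the balanced hypothesis in a sharper form'' via a transmission-demand argument, but that paragraph does not actually bound $\Delta$; the algorithm's Phase~2 length is $L'(\Delta+1)$ by construction, so unless you control $\Delta$ itself (or replace the $\Delta+1$ colouring by something else), no per-node demand accounting rescues the fixpoint.

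The paper instead uses the sharp bound $\Delta+1\le \rho w+b$: the degree of any vertex of the conflict graph is at most the number of tours, injected during the previous window, that conflict with a node, minus one (the tour itself). With this, Phase~2 has length at most $L'(\Delta+1)\le L(\rho w+b)$, the window inequality becomes $S(n)+L(\rho w+b)\le w$, i.e.\ $w(1-\rho L)\ge S(n)+bL$, and the $1-\rho L$ in the denominator falls out directly. This single-node bound on $\Delta$, rather than the sum over all nodes of the tour, is the missing idea in your plan; without it the claimed $\cO\bigl(\tfrac{n\log^4 n+bL}{1-\rho L}\bigr)$ latency does not follow.
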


\begin{proof}
The number of  tours injected in a window of length $w$ that contribute to the load of a node is at most $\rho w +  b$, each of stretch at most~$L$. 
By the design of \textsc{Old-Go-First},  in a super-round, each tour, that is still on its way, becomes colored and so is transmitted at some round of this super-round.
The message with such a tour is heard immediately by the receiving node, by the definition of coloring.
Since  phase two consists of $L'$ super-rounds, each old tour is delivered to its destination within this window.
Window size $w$ needs to be sufficiently large to accommodate the two phases.
By the design of the two phases, it is sufficient if $w$ satisfies the following inequality:
\begin{equation}
\label{eqn:super-round}
 S(n) + (\rho w + b) \cdot L \le w
\ .
\end{equation}
This is because phase one takes $S(n)$ rounds, and $\Delta + 1\le \rho w + b$, as at most these many tours conflicting with a node can be injected during the previous window.
Bound~\eqref{eqn:super-round} is equivalent to $S(n) + b L \le w(1-\rho L)$,  by algebra.
Since the inequality $\rho L <1$ holds, by the assumption that the adversary is balanced, we may take the following quantity
\[
u=\Bigl\lceil \frac{S(n)+ b  L}{1-\rho L}\Bigr\rceil
\]
as an upper bound of every window~$w$.
Packet latency is at most $2u$, because a packet injected in the beginning of a window is delivered by the end of the next window.  
Using the estimate $S(n)=\cO\left( n\log^4 n \right)$, we conclude that $\cO\bigl( \frac{n\log^4 n +bL}{1-\rho L} \bigr)$ is an upper bound on packet latency.
\end{proof}

\section{Conclusion}

We proposed an adversarial framework to study stability of deterministic distributed routing algorithms in multi-hop wireless networks with interferences. 
It is representative enough  to deny stability for sufficiently strong adversaries, namely, the unbalanced ones.
We showed that there exists a deterministic distributed routing algorithm that provides bounded packet latency against balanced adversaries in all connected radio networks.
This algorithm needs to know the size of the network, which is required in gossiping, but  the adversary does not need to be known.
Theorem~\ref{the:optimal_transmission} implies that achieving optimal packet latency for static instances of routing, in the case when packets need to make one hop only, is equivalent to finding the chromatic number of the respective conflict graph.  
One can show that conflict graphs have sufficiently expressive topologies to make the problem of their coloring NP-hard, as is the case for all simple graphs; we omit the details.

%: bibliography

\bibliographystyle{abbrv}

\bibliography{radio-route}

\end{document}